\newtheorem {Theorem}                 {Theorem}         [section]
\newtheorem {myalgorithm}    [Theorem]  {Algorithm}
\newtheorem {lemma}        [Theorem]  {Lemma}
\journal{arXiv}
\address{}
\begin{document} 
\begin{frontmatter}
\title{b-articulation points and b-bridges in strongly biconnected directed graphs}
\author{Raed Jaberi}
\begin{abstract}  

	A directed graph $G=(V,E)$ is called strongly biconnected if $G$ is strongly connected and the underlying graph of $G$ is biconnected. This class of directed graphs was first introduced by Wu and Grumbach. Let $G=(V,E)$ be a strongly biconnected directed graph. An edge $e\in E$ is a b-bridge  if the subgraph $G\setminus \left\lbrace e\right\rbrace =(V,E\setminus \left\lbrace  e\right\rbrace) $ is not strongly biconnected. A vertex $w\in V$ is a b-articulation point if $G\setminus \left\lbrace w\right\rbrace$ is not strongly biconnected, where $G\setminus \left\lbrace w\right\rbrace$ is the subgraph obtained from $G$ by removing $w$.  In this paper we study b-articulation points and b-bridges. 
	
\end{abstract} 
\begin{keyword}
Directed graphs  \sep Graph algorithms \sep Strongly biconnected directed graphs
\end{keyword}
\end{frontmatter}
\section{Introduction}
 A directed graph $G=(V,E)$ is called strongly biconnected if $G$ is strongly connected and the underlying graph of $G$ is biconnected. This class of directed graphs was first introduced by Wu and Grumbach \cite{WG2010}.  Let $G=(V,E)$ be a strongly biconnected directed graph. In this paper we study edges and vertices whose removal destroys strongly biconnectivity in $G$. An edge $e\in E$ is a b-bridge if the subgraph $G\setminus \left\lbrace e\right\rbrace =(V,E\setminus \left\lbrace  e\right\rbrace) $ is not strongly biconnected. A vertex $w\in V$ is a b-articulation point if  $G\setminus \left\lbrace w\right\rbrace$ is not strongly biconnected, where $G\setminus \left\lbrace w\right\rbrace$ is the subgraph obtained from $G$ by removing $w$.  $G$ is $2$-edge-strongly-biconnected (respectively, $2$-vertex-strongly biconnected) if the number of vertices in $G$ is at least $3$ and $G$ has no b-briges (respectively, b-articulation points).
Note that  
each $2$-edge-biconnected directed graph is $2$-edge-connected, but the converse is not necessarily true (see Figure \ref{fig:sbnewconcepts}). Moreover, $2$-vertex-connected directed graphs are not necessarily $2$-vertex-biconnected, as illustrated in figure \ref {fig:figureilustratebap}.

 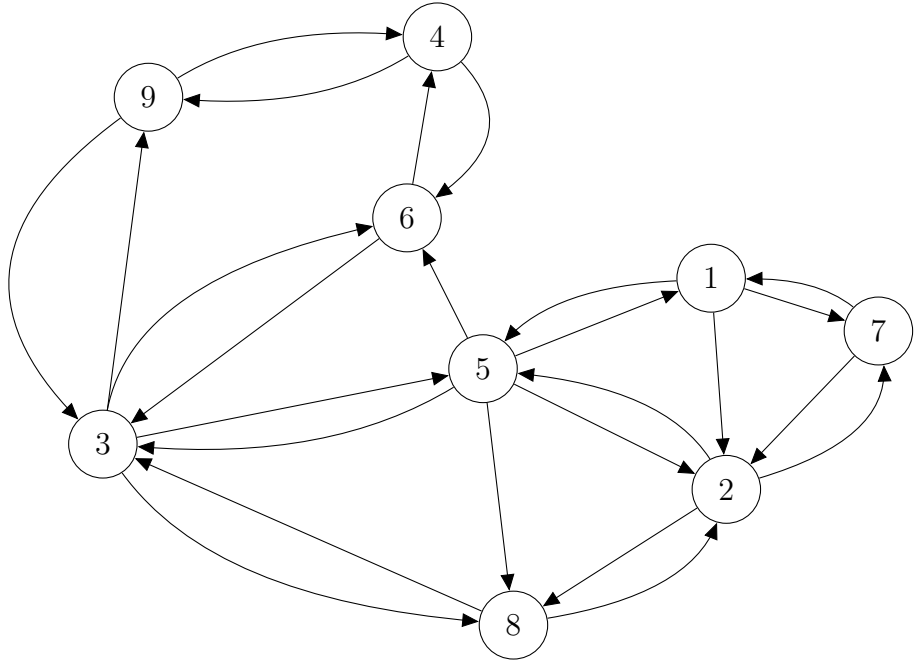
\begin{figure}[htp]
	\centering
	
		\begin{tikzpicture}[xscale=2]
		\tikzstyle{every node}=[color=black,draw,circle,minimum size=0.9cm]
		\node (v1) at  (2.5,0.7){$1$};
		\node (v2) at  (2.6,-2.1){$2$};
		\node (v3) at (-1.5, -1.5) {$3$};
		\node (v4) at (0.7,3.9) {$4$};
		\node (v5) at (1,-0.5) {$5$};
		\node (v6) at (0.5,1.5) {$6$};
		\node (v7) at  (3.6,0){$7$};
		\node (v8) at (1.2,-3.9) {$8$};
		\node (v9) at  (-1.2,3.1){$9$};

		\begin{scope}   
		\tikzstyle{every node}=[auto=right]   
		\draw [-triangle 45] (v9) to [bend left ] (v4);
		\draw [-triangle 45] (v4) to [bend left ](v9);
		\draw [-triangle 45] (v9) to[bend right ] (v3);
		\draw [-triangle 45] (v3) to (v9);
		\draw [-triangle 45] (v4) to [bend left ] (v6);
		\draw [-triangle 45] (v6) to (v4);
		\draw [-triangle 45] (v3) to (v5);
		\draw [-triangle 45] (v5) to (v8);
		\draw [-triangle 45] (v8) to (v3);
		\draw [-triangle 45] (v5) to (v2);
		\draw [-triangle 45] (v2) to (v8);
	    \draw [-triangle 45] (v5) to (v6);
	    \draw [-triangle 45] (v3) to[bend right ] (v8);
	    	\draw [-triangle 45] (v2) to [bend right ] (v5);
	    	\draw [-triangle 45] (v3) to [bend left ] (v6);
	    \draw [-triangle 45] (v6) to (v3);
	    \draw [-triangle 45] (v8) to [bend right ] (v2);
	    \draw [-triangle 45] (v5) to [bend left ] (v3);
	    \draw [-triangle 45] (v5) to (v1);
	    \draw [-triangle 45] (v1) to (v7);
	    \draw [-triangle 45] (v7) to (v2);
	    \draw [-triangle 45] (v1) to (v2);
	    \draw [-triangle 45] (v1) to [bend right ] (v5);
	    \draw [-triangle 45] (v7) to [bend right ] (v1);
	    \draw [-triangle 45] (v2) to [bend right ] (v7);
		\end{scope}
		\end{tikzpicture}
	\caption{A strongly biconnected graph $G$. This graph is $2$-edge-connected since it has no strong birdges. But $G$ contains a b-bridge $(5,6)$. Thus, $G$ is not $2$-edge-strongly biconnected.}
	\label{fig:sbnewconcepts}
\end{figure}
\begin{figure}[htp]
	\centering
	\scalebox{0.96}{
		\begin{tikzpicture}[xscale=2]
		\tikzstyle{every node}=[color=black,draw,circle,minimum size=0.9cm]
		\node (v1) at (2.2,3.1) {$1$};
		\node (v2) at (-2.5,0) {$2$};
		\node (v5) at (-0.5, -2.5) {$5$};
		\node (v4) at (3.6,-1){$4$};
		\node (v3) at (1,0)  {$3$};
		\node (v6) at  (-1.2,3.1){$6$};

		\begin{scope}   
		\tikzstyle{every node}=[auto=right]   
	  	\draw [-triangle 45] (v1) to (v6);
	  	\draw [-triangle 45] (v5) to (v6);
	  \draw [-triangle 45] (v6) to [bend left ](v1);
	  \draw [-triangle 45] (v1) to (v4);
	  \draw [-triangle 45] (v4) to [bend right ](v1);
	  \draw [-triangle 45] (v3) to [bend right ] (v4);
	  \draw [-triangle 45] (v4) to (v3);
	  \draw [-triangle 45] (v3) to [bend left ](v5);
	  \draw [-triangle 45] (v2) to (v5);
	  \draw [-triangle 45] (v5) to (v3);
	  \draw [-triangle 45] (v5) to [bend left ](v2);
	  \draw [-triangle 45] (v2) to (v6);
	  \draw [-triangle 45] (v6) to [bend right ](v2);
	  \draw [-triangle 45] (v1) to (v3);
	  \draw [-triangle 45] (v2) to (v6);
	  \draw [-triangle 45] (v2) to (v3);
	  \draw [-triangle 45] (v6) to (v3);
	  \draw [-triangle 45] (v4) to (v6);
	
		\end{scope}
		\end{tikzpicture}}
	\caption{A strongly biconnected graph $G=(V,E)$.  $G$ is $2$-vertex-connected. Since vertex $6$ is a $b$-articulation point, $G$ is not $2$-vertex-strongly biconnnected. }
	\label{fig:figureilustratebap}
\end{figure}
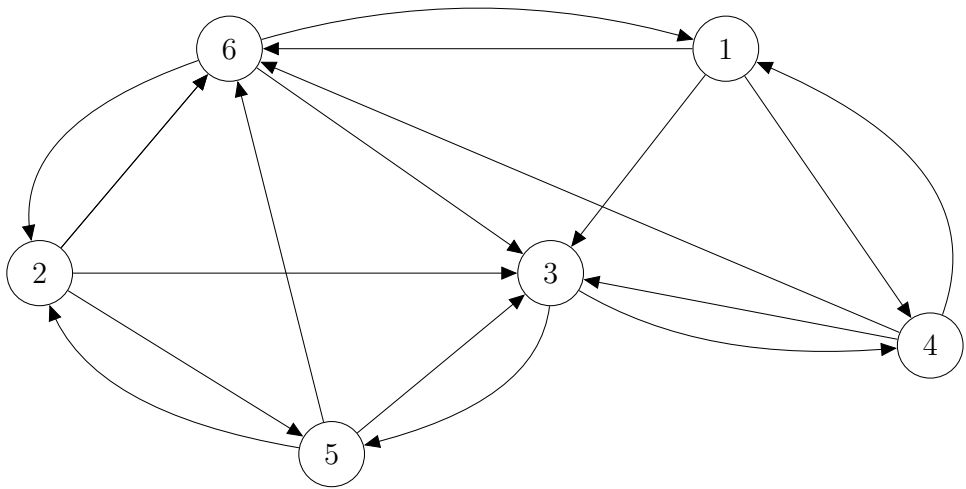

Articulation points and bridges of an undirected graph can be identified in linear time \cite{T72,T74,JS13}. In $2010$, Georgiadis \cite{G10} gave a linear time algorithm to test whether a directed graph is $2$-vertex-connected. Italiano et al. \cite{ILS12} gave linear time algorithms for calculating  strong articulation points and strong bridges in directed graphs. Wu and Grumbach \cite{WG2010} introduced a class of directed graphs, called strongly biconnected graphs.
In this paper we study  b-articulation points and b-bridges.

\section{Computing b-bridges}
This section illustrates how to compute b-bridges in strongly biconnected graphs. 
\begin{lemma}\label{def:sbbridgelemma}
	Let $G=(V,E)$ be a strongly biconnected directed graph and let $e$ be a strong bridge in $G$. Then $e$ is a b-bridge.
\end{lemma}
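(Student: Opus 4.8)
The plan is to argue directly from the definitions, since the statement is essentially an inclusion of graph classes. Recall that, by definition, an edge $e \in E$ is a strong bridge of $G$ exactly when $G \setminus \{e\}$ is not strongly connected, while a directed graph is called strongly biconnected only if it is strongly connected (and additionally its underlying graph is biconnected). Thus strong biconnectivity is a strictly stronger property than strong connectivity.

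First I would make explicit the trivial implication: any directed graph that fails to be strongly connected also fails to be strongly biconnected, because strong connectivity is one of the two conjuncts in the definition of strong biconnectivity. In other words, the class of strongly biconnected directed graphs is contained in the class of strongly connected directed graphs.

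Then, since $e$ is a strong bridge of $G$, the subgraph $G \setminus \{e\} = (V, E \setminus \{e\})$ is not strongly connected, and hence, by the observation above, it is not strongly biconnected. By the definition of a b-bridge, this is precisely the assertion that $e$ is a b-bridge, which finishes the argument.

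I do not expect any genuine obstacle here; the only point worth stating carefully is that deleting an edge leaves the vertex set $V$ unchanged, so no degenerate situation (such as the number of vertices dropping below a required threshold) can arise, and the reduction from ``not strongly biconnected'' to ``not strongly connected'' is the whole content of the proof.
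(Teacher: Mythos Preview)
Your proposal is correct and follows essentially the same approach as the paper's proof: both simply observe that removing a strong bridge destroys strong connectivity, which a fortiori destroys strong biconnectivity, so $e$ is a b-bridge. The paper's version is a terse two-sentence argument with the same content.
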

\begin{proof}
	Since $e$ is a strong bridge in $G$ and $G$ is strongly connected, the subgraph $(V,E\setminus\left\lbrace e\right\rbrace )$ is not strongly connected. Therefore, $e$ is a b-bridge.
\end{proof}
Note that b-bridges are not necessarily strong bridges, as shown in Figure \ref{fig:sbnewconceptssbbr}.

\begin{figure}[h!]
	\centering
	
	\begin{tikzpicture}[xscale=2]
	\tikzstyle{every node}=[color=black,draw,circle,minimum size=0.9cm]
	\node (v1) at  (2.4,0.7){$1$};
	\node (v2) at  (2.5,-2.1){$2$};
	\node (v3) at (-1.4, -1.5) {$3$};
	\node (v4) at (0.7,3.9) {$4$};
	\node (v5) at (1,-0.5) {$5$};
	\node (v6) at (0.5,1.5) {$6$};
	\node (v7) at  (3.6,0){$7$};
	\node (v8) at (1.2,-3.6) {$8$};
	\node (v9) at  (-1.2,3.1){$9$};
	\node (v10) at  (3,3){$10$};	
	\begin{scope}   
	\tikzstyle{every node}=[auto=right]   
	\draw [-triangle 45] (v9) to [bend left ] (v4);
	\draw [-triangle 45] (v4) to [bend left ](v9);
	\draw [-triangle 45] (v9) to[bend right ] (v3);
	\draw [-triangle 45] (v3) to (v9);
		\draw [-triangle 45] (v10) to (v6);

	\draw [-triangle 45] (v4) to [bend left ] (v10);
	\draw [-triangle 45] (v6) to (v4);
	\draw [-triangle 45] (v3) to (v5);
	\draw [-triangle 45] (v5) to (v8);
	\draw [-triangle 45] (v8) to (v3);
	\draw [-triangle 45] (v5) to (v2);
	\draw [-triangle 45] (v2) to (v8);
	\draw [-triangle 45] (v5) to (v6);
	\draw [-triangle 45] (v3) to[bend right ] (v8);
	\draw [-triangle 45] (v2) to [bend right ] (v5);
	\draw [-triangle 45] (v6) to (v3);
	\draw [-triangle 45] (v8) to [bend right ] (v2);
	\draw [-triangle 45] (v5) to [bend left ] (v3);
	\draw [-triangle 45] (v5) to (v1);
	\draw [-triangle 45] (v1) to (v2);
	\draw [-triangle 45] (v1) to [bend right ] (v5);
	\draw [-triangle 45] (v7) to [bend right ] (v1);
	\draw [-triangle 45] (v2) to [bend right ] (v7);
	\end{scope}
	\end{tikzpicture}
	\caption{A strongly biconnected graph $G$. Edge $(2,7)$ is both a strong bridge and a b-bridge. Note that the underlying graph of $G\setminus\left\lbrace (5,6) \right\rbrace $ has an articulation point. Thus, $(5,6)$ is a b-bridge. If we remove $(5,6)$ from $G$, the reamaining subgraph is still strongly connected. Therefore, $(5,6)$ does not form a strong bridge in $G$} 
	\label{fig:sbnewconceptssbbr}
\end{figure}
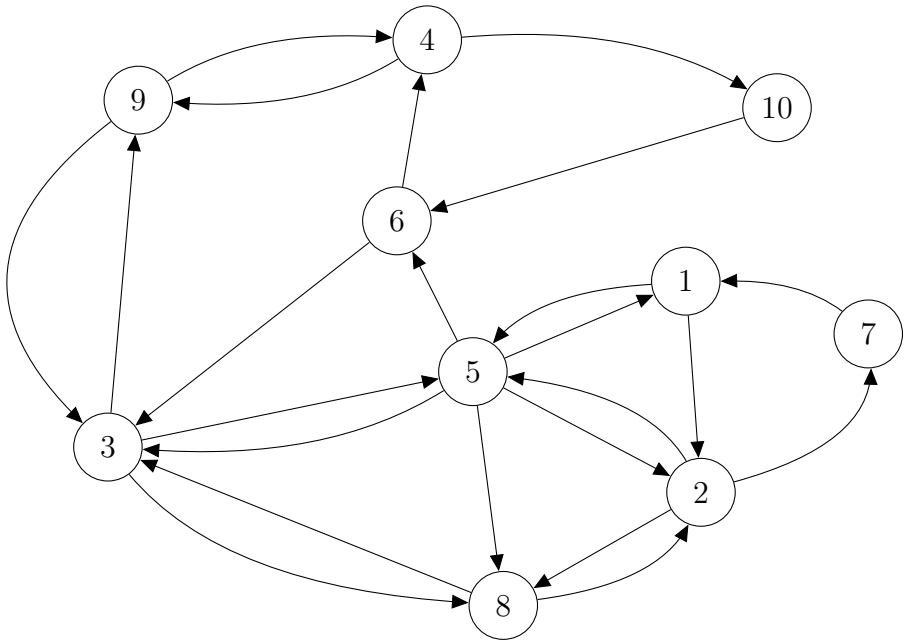

Wu and Grumbach \cite{WG2010} introduced strongly biconnected components. A strongly biconnected component of a strongly connected graph $G=(V,E)$ is a maximal vertex subset $U\subseteq V$ such that the induced subgraph on $U$ is strongly biconnected.	

Let $G$ be a strongly biconnected directed graph. The following lemma shows how to decrease the number of strongly biconnected components in a strongly connected subgraph of $G$.
\begin{lemma} \label{def:lemmaaddingedge}
	Let $G_s=(V,E_s)$ be a subgraph of a strongly biconnected directed graph $G=(V,E)$ such that $G_s$ is strongly connected and $G_s$ has $t$ strongly biconnected components. Let $(u,w)$ be an edge in $E\setminus E_s$  such that $u,w $ are in distinct strongly
	biconnected components $C_{1}^{B},C_{2}^{B}$ of $G_{s}$ with $u,w \notin C_{1}^{B} \cap C_{2}^{B}$. Then the graph $(V,E\cup \left\lbrace (u,w) \right\rbrace )$ contains at most $t-1$ strongly biconnected components.
\end{lemma}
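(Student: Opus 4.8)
The plan is to translate the statement into the language of blocks (biconnected components) of undirected graphs and then invoke the classical behaviour of a block decomposition under the insertion of a single edge. Write $G'_s:=(V,E_s\cup\{(u,w)\})$; since $G_s$ is strongly connected and $G'_s\supseteq G_s$, the digraph $G'_s$ is strongly connected, so its strongly biconnected components are well defined, and what has to be shown is that there are at most $t-1$ of them.

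The first step is the identification: for every strongly connected digraph $H$, the strongly biconnected components of $H$ are exactly the blocks of the underlying graph $U(H)$. One direction is immediate --- a strongly biconnected induced subdigraph of $H$ is biconnected in $U(H)$, hence lies inside a block. For the converse, let $B$ be a block of $U(H)$; its underlying graph is biconnected by definition, so it remains to check that the induced subdigraph $H[B]$ is strongly connected. Given $x,y\in B$, start from a closed directed walk through $x$ and $y$ in $H$ (one exists by strong connectivity) and repeatedly delete every maximal sub-walk that leaves $B$: such a sub-walk must leave $B$ at a cut vertex $c$ of $U(H)$ and can return to $B$ only through that same $c$, so deleting it yields a strictly shorter closed directed walk still passing through $x$ and $y$; iterating produces such a walk inside $B$. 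Hence $H[B]$ is strongly connected, so $B$ is a maximal strongly biconnected induced subdigraph. Consequently the number of strongly biconnected components of $G_s$ equals the number of blocks of $U(G_s)$, namely $t$, and the number of strongly biconnected components of $G'_s$ equals the number of blocks of $U(G'_s)$.

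Now I reduce to a purely undirected claim. Because $u$ and $w$ lie in distinct strongly biconnected components of $G_s$, no block of $U(G_s)$ contains both of them; in particular $\{u,w\}$ is not already an edge of $U(G_s)$, so $U(G'_s)=U(G_s)+\{u,w\}$ has exactly one more edge than $U(G_s)$. It is therefore enough to show: if $H$ is a connected graph and $\{u,w\}$ joins two vertices lying in no common block of $H$, then $H+\{u,w\}$ has strictly fewer blocks than $H$. Choose a simple $u$--$w$ path $P$ in $H$ (it exists since $H$ is connected, and has length at least $2$ because $\{u,w\}\notin E(H)$) and put $Z:=P+\{u,w\}$, a cycle of $H+\{u,w\}$. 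Let $D_1,\dots,D_m$ be the distinct blocks of $H$ in which $Z$ uses an edge; since $u$ and $w$ both lie on $Z$ but in no common block of $H$, we have $m\ge 2$. All edges of $Z$ lie in one block $\widehat B$ of $H+\{u,w\}$; and since each $D_i$ is biconnected (so any two of its edges lie on a common cycle) and $Z$ meets every $D_i$, in fact $D_1\cup\dots\cup D_m\subseteq\widehat B$. Any block of $H+\{u,w\}$ other than $\widehat B$ avoids the edge $\{u,w\}$, hence is a subgraph of $H$ and, by maximality, a block of $H$ distinct from each $D_i$. So the blocks of $H+\{u,w\}$ are $\widehat B$ together with some of the $t-m$ blocks of $H$ outside $\{D_1,\dots,D_m\}$, at most $t-m+1\le t-1$ in all --- which is the lemma.

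The only step I expect to be non-routine is the identification of strongly biconnected components with blocks of the underlying graph, i.e. the ``splice out the excursions'' argument that $H[B]$ is strongly connected for every block $B$ of a strongly connected digraph; the rest is the standard merging behaviour of blocks under one edge insertion, the only fussy point being the bridge block $K_2$ (which has a single edge and is swallowed into $\widehat B$ as soon as that edge lies on $Z$). The hypothesis that $u$ and $w$ lie in distinct strongly biconnected components, with neither of them equal to the possible shared cut vertex $C_1^B\cap C_2^B$, serves only to guarantee that no single strongly biconnected component of $G_s$ contains both $u$ and $w$, which is precisely what yields $m\ge 2$ and hence the strict decrease in the component count.
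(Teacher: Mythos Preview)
Your argument is correct, and you rightly read the conclusion as concerning $G_s'=(V,E_s\cup\{(u,w)\})$ rather than the literal $(V,E\cup\{(u,w)\})$. Your route, however, is considerably more elaborate than the paper's. The paper's entire proof is three lines: $G_s$ is strongly connected, so take a simple directed path $P$ from $w$ to $u$; then $P$ together with $(u,w)$ is a simple directed cycle, which is itself strongly biconnected, so $u$ and $w$ lie in a common strongly biconnected component of $G_s'$. You instead first prove the structural fact that in any strongly connected digraph the strongly biconnected components coincide with the blocks of the underlying undirected graph, and then run the standard block--merging argument for inserting one undirected edge. What your approach buys is a careful justification of the \emph{counting} step (that the total number of components actually drops by at least one), which the paper leaves implicit; it also yields a clean identification the paper never states. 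What the paper's approach buys is brevity: the directed-cycle observation gets to the same endpoint with essentially no machinery. One small caveat: your inference that the stated hypothesis forces $u$ and $w$ to share no block is not literally valid---if $u$ and $w$ are both cut vertices, a third block $C_3^B$ could contain them while $u\in C_1^B\setminus C_2^B$ and $w\in C_2^B\setminus C_1^B$---but this is a looseness in the lemma's phrasing rather than in your reasoning; the intended hypothesis, which both you and the paper tacitly use, is precisely that no strongly biconnected component of $G_s$ contains both $u$ and $w$.
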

\begin{proof}
 Since $G_s$ is strongly connected, there exists a simple path $P$ from $w$ to $u$ in $G_s$. Path $p$ and edge $(u,w)$ form a simple cycle. Consequently, $u,w$ are in the same strongly biconnected component of $(V,E\cup \left\lbrace (u,w) \right\rbrace )$.
\end{proof}

Algorithm \ref{algo:algorithmforbridges} can compute all the b-bridges of a strongly biconnected directed graph.
\begin{figure}[h]
	\begin{myalgorithm}\label{algo:algorithmforbridges}\rm\quad\\[-5ex]
		\begin{tabbing}
			\quad\quad\=\quad\=\quad\=\quad\=\quad\=\quad\=\quad\=\quad\=\quad\=\kill
			\textbf{Input:} A strongly biconnected graph $G=(V,E)$.\\
			\textbf{Output:} $B$, where $B$ is the set of  b-bridges in $G$\\
			{\small 1}\> $U \leftarrow \emptyset$\\
			{\small 2}\> Calculate the set of strong bridges in $G$\\
			{\small 3}\>\textbf{for} every strong bridge $(v,w)$ in $G$ \textbf{do} \\
			{\small 4}\>\>$U \leftarrow U\cup \left\lbrace (v,w)\right\rbrace $ \\
			{\small 5}\> select a vertex $y \in V $\\
				{\small 6}\> $E_{y} \leftarrow \emptyset$\\
			{\small 7}\> build a spanning tree $T$ rooted at $y$ in $G$\\
			{\small 8}\>\textbf{for} each edge $(v,w)$ in $T$ \textbf{do} \\
			{\small 9}\>\>$E_{y} \leftarrow E_{y}\cup \left\lbrace (v,w)\right\rbrace $ \\
			{\small 10}\> build $G_{r}=(V,E_{r}),$ where $E_{r}=\left\lbrace (v,w)  \mid (w,v)\in E  \right\rbrace $\\
			{\small 11}\> create a spanning tree $T_{y}$ rooted at $y$ in $G_{r}$\\
			{\small 12}\>\textbf{for} each edge $(v,w)$ in $T_y$ \textbf{do} \\
			{\small 13}\>\>$E_{y} \leftarrow E_{y}\cup \left\lbrace (w,v)\right\rbrace $ \\
			{\small 14}\> \textbf{while} $G_{y}=(V,E_{y})$ is not  strongly biconnected \textbf{do}\\ 
			{\small 15}\>\> identify the strongly biconnected components of $G_{y}$\\
			{\small 16}\>\> find an edge $(u,w) \in E\setminus E_{y}$ such that $u,w $ are in distinct strongly\\
			{\small 17}\>\>\>  biconnected components $C_{1}^{B},C_{2}^{B}$ of $G_{y}$ with $u,w \notin C_{1}^{B} \cap C_{2}^{B}$\\
			{\small 18}\>\>  $E_{y}\leftarrow E_{y} \cup \left\lbrace (u,w) \right\rbrace  $.\\
			{\small 19}\> $B \leftarrow U$.\\ 
			{\small 20}\> \textbf{for} every edge $(u,w)\in E_y \setminus B$ \textbf{do} \\
			{\small 21}\>\> \textbf{if}  $G \setminus \left\lbrace  (u,w)\right\rbrace $ \ is not strongly biconnected \textbf{then}\\
			{\small 22}\>\>\> $B\leftarrow B \cup \left\lbrace e\right\rbrace  $.
		\end{tabbing}
	\end{myalgorithm}
\end{figure}
\begin{lemma}\label{def:edgesnotbbridgess}
	Let $e\in E\setminus E_y$. Then $e$ is not a b-bridge.
\end{lemma}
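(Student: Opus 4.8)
The plan is to use the edge set $E_y$ produced by Algorithm~\ref{algo:algorithmforbridges} as a certificate: I will argue that $G_y=(V,E_y)$ is a strongly biconnected spanning subgraph of $G$, and that deleting an edge that lies outside $E_y$ cannot destroy strong biconnectivity, because the whole of $G_y$ survives inside $G\setminus\{e\}$.

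First I would check that the while loop of lines~14--18 really terminates with $G_y$ strongly biconnected. After lines~6--13 the set $E_y$ contains the edges of the spanning tree $T$ rooted at $y$ built in line~7 together with the reverses of the edges of the spanning tree $T_y$ rooted at $y$ built in line~11 inside $G_r$; hence in $G_y$ every vertex is reachable from $y$ and reaches $y$, so $G_y$ is strongly connected, and it stays strongly connected since the loop only adds edges. Each pass through the loop body adds an edge $(u,w)\in E\setminus E_y$ whose endpoints lie in distinct strongly biconnected components $C_1^B,C_2^B$ of $G_y$ with $u,w\notin C_1^B\cap C_2^B$; by Lemma~\ref{def:lemmaaddingedge} the number of strongly biconnected components of $G_y$ strictly decreases, so the loop iterates at most $|V|$ times. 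One must also verify that such an edge exists whenever $G_y$ is strongly connected but not yet strongly biconnected: in that case the (connected) underlying graph of $G_y$ has an articulation point $x$, and since $G$ is strongly biconnected, $G\setminus\{x\}$ is still strongly connected, which forces $G$ to contain an edge joining two different connected components of the underlying graph of $G_y\setminus\{x\}$; that edge cannot lie in $E_y$, and its endpoints cannot lie in a common strongly biconnected component of $G_y$. This existence step is the one I expect to require the most care. When the loop exits, its guard is false, so $G_y$ is strongly biconnected.

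Finally, let $e\in E\setminus E_y$. Then $E_y\subseteq E\setminus\{e\}$, so $G_y$ is a spanning subgraph of $G\setminus\{e\}=(V,E\setminus\{e\})$. Adding edges to a strongly connected digraph without changing the vertex set keeps it strongly connected, so $G\setminus\{e\}$ is strongly connected; and adding (undirected) edges to a biconnected graph without changing the vertex set keeps it biconnected, so, since the underlying graph of $G_y$ is biconnected and is a spanning subgraph of the underlying graph of $G\setminus\{e\}$, the latter is biconnected as well. Hence $G\setminus\{e\}$ is strongly biconnected, i.e., $e$ is not a b-bridge. The only genuinely nontrivial point is the existence argument inside the termination proof; the monotonicity step that actually closes this lemma --- ``a spanning supergraph of a strongly biconnected digraph is strongly biconnected'' --- is routine.
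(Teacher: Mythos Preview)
Your approach is exactly the paper's: show that $G_y=(V,E_y)$ ends up strongly biconnected (strong connectivity from the two spanning trees, then Lemma~\ref{def:lemmaaddingedge} drives the component count down), and conclude that any edge outside $E_y$ is redundant. You are in fact more thorough than the paper, which neither spells out the final monotonicity step (``a spanning supergraph of a strongly biconnected digraph is strongly biconnected'') nor addresses whether the edge sought in lines~16--17 actually exists.

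One slip in your existence argument: from ``$G$ is strongly biconnected'' you infer ``$G\setminus\{x\}$ is still strongly connected,'' but strongly biconnected only means strongly connected with biconnected underlying graph; it does \emph{not} imply $2$-vertex connectivity of the digraph (Figure~\ref{fig:figureilustratebap} in the paper is precisely a strongly biconnected graph with a strong articulation point). What you actually need, and what does follow, is that the \emph{underlying} graph of $G\setminus\{x\}$ is connected; this already forces an edge of $G$ (avoiding $x$) between two connected components of the underlying graph of $G_y\setminus\{x\}$, and the rest of your argument goes through unchanged.
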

\begin{proof}
	It is known that a strongly connected subgraph of a strongly connected graph can be obtained by finding a spanning tree in $G$ and in $G_r$ (see \cite{FJ81}). By lemma \ref{def:lemmaaddingedge}, the number of strongly biconnected components must decrease by at least one in each iteration of while loop. Therefore, the subgraph $(V,E_y)$ is strongly biconnected.
\end{proof}
The correctness of Algorithm \ref{algo:algorithmforbridges} follows from Lemma \ref{def:sbbridgelemma}, Lemma \ref{def:lemmaaddingedge}, and Lemma \ref{def:edgesnotbbridgess}.

\begin{Theorem}
	The Algorithm \ref{algo:algorithmforbridges} runs in time $O(nm)$.
\end{Theorem}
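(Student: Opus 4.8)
The plan is to bound the cost of each block of the algorithm separately and observe that the total is dominated by the two loops that perform $O(n)$ strong-biconnectivity checks. First I would account for lines 1--4: computing all strong bridges in $G$ takes linear time $O(n+m)$ by the algorithm of Italiano et al.\ \cite{ILS12}, and copying them into $U$ costs $O(m)$ since there are at most $O(n)$ strong bridges (in fact at most $2n-2$). Lines 5--13 build a spanning tree of $G$ and a spanning tree of the reverse graph $G_r$, each by a single depth-first or breadth-first search, so this is $O(n+m)$ as well; the resulting edge set $E_y$ initially has at most $2n-2$ edges.

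Next I would analyze the \textbf{while} loop of lines 14--18. By Lemma \ref{def:lemmaaddingedge}, each iteration strictly decreases the number of strongly biconnected components of $G_y$, and since that number starts at most $n$ and must reach $1$, the loop runs at most $n-1$ times. Each iteration does the following: test strong biconnectivity and identify the strongly biconnected components of $G_y$, which can be done in linear time in the size of $G_y$; and scan $E\setminus E_y$ for a suitable edge $(u,w)$ joining two distinct strongly biconnected components with $u,w$ outside their intersection, which costs $O(m)$. Throughout the loop $|E_y|$ grows by one per iteration, so $|E_y| = O(n)$ at all times, and hence each iteration costs $O(n+m) = O(m)$ (the graph is connected, so $m \ge n-1$). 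Thus the whole while loop costs $O(nm)$.

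Finally, lines 19--22 iterate over the edges of $E_y$; since $|E_y| = O(n)$ after the while loop, there are $O(n)$ iterations, and each one removes a single edge and tests whether $G \setminus \{(u,w)\}$ is strongly biconnected. Testing strong biconnectivity amounts to checking strong connectivity (one pass of a reachability search, $O(n+m)$) together with biconnectivity of the underlying undirected graph (a standard articulation-point search, $O(n+m)$); hence each test is $O(m)$, and this block costs $O(nm)$. Adding the three contributions, the dominant terms are the two $O(nm)$ blocks, so the algorithm runs in time $O(nm)$.

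The main obstacle, and the point that most needs care, is justifying that $|E_y|$ never exceeds $O(n)$: this relies on the bound of $2n-2$ on the size of the two spanning trees together with the iteration bound of Lemma \ref{def:lemmaaddingedge} for the while loop, and it is precisely what keeps each strong-biconnectivity recomputation at $O(n+m)$ rather than something larger. Once that is established, every remaining step is a routine linear-time graph subroutine invoked $O(n)$ times.
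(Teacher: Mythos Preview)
Your proof is correct and follows essentially the same approach as the paper: bound each block of the algorithm separately, use Lemma~\ref{def:lemmaaddingedge} to cap the while loop at $n-1$ iterations, and conclude that the two $O(nm)$ blocks dominate. If anything, your argument is more careful than the paper's, since you explicitly track the bound $|E_y|=O(n)$ (from the $2n-2$ spanning-tree edges plus at most $n-1$ edges added in the while loop) and use it to justify that the final for loop runs $O(n)$ times; the paper states this loop ``executes $n-|U|$ times'' without spelling out why $|E_y|$ is $O(n)$.
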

\begin{proof}
	 The set of strong bridges $U$ can be calculated in linear time using the algorithm of Italiano et al. \cite{ILS12}.
		A spanning tree of a strongly biconnected graph can be constructed in linear time using depth first search. Thus, lines $7$--$13$ take linear time.  By Lemma \ref{def:lemmaaddingedge}, the number of iterations of the while loop is at most $n-1$. Moreover, the strongly biconnected components of a strongly connected graph can be found in linear time \cite{WG2010}. Therefore, the while loop of lines $14$--$18$ takes time $O(nm)$. The for loop of lines $20$--$22$ executes $n-|U|$ times. Thus, the total time taken by this loop is $O(nm)$.
	  
\end{proof}

\section{Open problem}
It is easy to see that the b-articulation points of a strongly biconnected directed graph $G$ can be computed in time $O((n-a)m)$, where $a$ is the number of strong articulation points in $G$. Strong articulation points can be computed in linear time \cite{ILS12,FILOS12}.
We leave as an open problem  whether there is a linear time algorithm for calculating b-bridges and b-articulation points in directed graphs.
. 

\end{document}